\documentclass[11pt]{article}

\usepackage[utf8]{inputenc}
\usepackage[T1]{fontenc}
\usepackage{microtype}
\usepackage{geometry}
\usepackage{latexsym}
\usepackage{amsmath,amssymb,amsthm,mathtools}
\usepackage[scr]{rsfso}
\usepackage{stmaryrd}
\usepackage{enumitem}
\usepackage{tikz-cd}
\usepackage{mdframed}
\usepackage{booktabs}
\usepackage{hyperref}
\hypersetup{
  colorlinks=true,
  linkcolor=blue,
  citecolor=blue,
  urlcolor=blue,
  pdftitle={Positive Topology and Feasible Refinement: Forcing Matrices, Positivity, and Information},
  pdfauthor={Mirco A. Mannucci; Giovanni Sambin},
  pdfsubject={Positive topology, constructive mathematics, forcing semantics, and feasible resource-aware refinement},
  pdfkeywords={positive topology, formal topology, pointfree topology, forcing semantics, Galois adjunctions, positivity, constructive mathematics, feasible refinement, resource-bounded semantics}
}

\theoremstyle{definition}
\newtheorem{definition}{Definition}[section]
\newtheorem{remark}[definition]{Remark}
\newtheorem{example}[definition]{Example}
\theoremstyle{plain}
\newtheorem{proposition}[definition]{Proposition}
\newtheorem{theorem}[definition]{Theorem}

\newcommand{\Pow}{\mathcal P}
\newcommand{\primconst}[1]{{\mathop{\sf #1}}}
\newcommand{\ext}{\primconst{\,ext\,}}
\newcommand{\rest}{\primconst{\,rest\,}}
\newcommand{\inte}{\primconst{\,int\,}}
\newcommand{\cl}{\primconst{\,cl\,}}
\newcommand{\A}{{\mathscr{A}}}
\newcommand{\J}{{\mathscr{J}}}

\newcommand{\forces}{\Vdash}
\newcommand{\cov}{\lhd}
\newcommand{\fish}{\ltimes}
\newcommand{\releps}{\mathrel{\varepsilon}}
\newcommand{\overlap}{\mathrel\between}
\newcommand{\sse}{\leftrightarrow}
\newcommand{\sub}{\subseteq}

\newcommand{\Split}{\mathsf{Split}}

\title{Positive Topology and Feasible Refinement: \\ Forcing Matrices, Positivity, and Information}
\author{Mirco A. Mannucci\\\texttt{mirco@holomathics.com}
\and Giovanni Sambin\\\texttt{sambin@math.unipd.it}}
\date{September 2026}

\begin{document}
\maketitle

\begin{center}
\textbf{Keywords:} positive topology, formal topology, pointfree topology, locales and frames, overt locales (overtness), positivity relation $a\fish U$, Galois connections and adjunctions, forcing semantics (forcing matrices), closed sets without complements, information refinement, game semantics, constructive mathematics, resource-/cost-aware semantics

\medskip
\textbf{MSC 2020:} 03F65 (Primary); 06D22, 06D20, 18B25, 18A40 (Secondary)

\medskip
\textbf{ACM CCS:} Theory of computation $\to$ Logic; Semantics and reasoning; Program semantics; Type theory; Software and its engineering $\to$ Software verification

\medskip
\textbf{arXiv categories:} math.LO (primary); math.CT, cs.LO (cross-list)
\end{center}

\begin{abstract}
Starting from a point--basic-open forcing relation $x\forces a$ between a set of points/models $X$
and basic opens/generators $S$, we obtain the canonical Galois adjunction
$\ext \dashv \Box$ between $\Pow(S)$ and $\Pow(X)$ encoding the ``open/locale''
(universal, cover/refinement) content~\cite{MacLaneMoerdijk1992}.
Less standard, but equally canonical, is a second adjunction $\Diamond \dashv \rest$ encoding the
``closed/positive'' (existential, probe/meet) content; this yields a direct origin for Sambin-style
positivity $a\fish U$ and ``closed without complements'' as fixed points of the contractive operator
$\J=\Diamond\circ\rest$~\cite{Sambin-OUP-PositiveTopology,CoquandSambinSmithValentini2003}.
We isolate a reconstruction principle: the forcing matrix is recovered uniquely from either residual pair via singleton tests.

When points are given, positivity can be induced from forcing in this way; in the formal pointfree approach,
positivity is instead taken as primitive and compatibility with cover is axiomatic.  This distinction supports
two operational readings~\cite{Sambin-OUP-PositiveTopology,CirauloVickers2016}: (i) an information-theoretic
view, where covers express refinement of partial information and positivity expresses witnessed feasibility,
and (ii) a game view, where a positive witness supplies a strategy through successive cover refinements.

The final part of the survey is deliberately programmatic.  We sketch how resource annotations may refine
ordinary forcing, how budget-indexed covers and positivity may then be studied, and how enriched variants
suggest resource-aware semantics.  We do \emph{not} claim here a final resource logic: the interaction among
budgeted forcing, cover, positivity, and logical connectives is deferred to a subsequent article~\cite{MannucciSambinForthcoming}.
Examples from medical diagnosis, legal reasoning, and AI systems illustrate the intended direction.
\end{abstract}

\tableofcontents

\section{Introduction: Why Positive Topology?}

Before diving into the formal machinery, we motivate the key ideas through a concrete scenario.

\subsection{A motivating example: Medical diagnosis}

Consider the task of diagnosing a patient based on observed symptoms.

\begin{itemize}
  \item $X$ = patients (the ``points'' or ``states of the world'')
  \item $S$ = symptoms, tests, conditions (the ``observables'' or ``basic opens'')
  \item $x \forces a$ means ``patient $x$ exhibits symptom/condition $a$''
\end{itemize}

A doctor observes that a patient has fever, cough, and fatigue. What can we infer?

\paragraph{The cover relation: semantic entailment.}
Certain conditions \emph{imply} certain symptoms. For instance:
\[
\mathsf{pneumonia} \cov \{\mathsf{fever}, \mathsf{cough}, \mathsf{chest\_pain}, \mathsf{infiltrates}\}
\]
This means: every patient with pneumonia exhibits at least one of fever, cough, chest pain, or lung infiltrates.
More generally, $a \cov U$ says: \emph{whenever $a$ holds, something in $U$ holds}.

This is the \textbf{universal/refinement} side: covers encode what implies what.

\paragraph{The positivity relation: witnessed consistency.}
Now consider the question: is the symptom cluster $\{\mathsf{fever}, \mathsf{cough}, \mathsf{fatigue}\}$ a \emph{realizable} profile?
That is, does there exist a patient whose \emph{complete} symptom profile is contained in this set?

We write:
\[
\mathsf{fever} \fish \{\mathsf{fever}, \mathsf{cough}, \mathsf{fatigue}\}
\]
to mean: there exists a patient with fever whose entire observable profile is contained in the displayed set.

This patient is a \textbf{witness} --- the symptom cluster is realizable, not a contradiction.

Contrast with:
\[
\mathsf{fever} \fish \{\mathsf{hypothermia}\}
\]
This is \emph{false}: no patient has both fever and only hypothermia. The combination is inconsistent.

This is the \textbf{existential/positivity} side: positivity encodes what is consistent, what has witnesses.

\paragraph{Why both matter.}
\begin{itemize}
  \item \textbf{Covers alone} tell us what implies what, but not what's realizable.
  \item \textbf{Positivity alone} tells us what's consistent, but not what follows from what.
  \item \textbf{Together} they give us: grounded inference (every conclusion has a witness) with semantic structure (implications are tracked).
\end{itemize}

\paragraph{The compatibility axiom.}
The key interaction between covers and positivity is:
\[
(a \cov U) \land (a \fish V) \implies \exists u \in U.\; u \fish V
\]
In medical terms: if pneumonia implies \{fever, cough, infiltrates\}, and there exists a pneumonia patient confined to profile $V$,
then there exists a fever patient (or cough patient, or infiltrates patient) also confined to $V$.

\textbf{Refinement preserves witnesses.} This is the soundness condition that makes the framework coherent.

\subsection{The general pattern}

The medical example instantiates a universal pattern:

\begin{center}
\begin{tabular}{lll}
\toprule
\textbf{Domain} & \textbf{Points $X$} & \textbf{Observables $S$} \\
\midrule
Medicine & Patients & Symptoms, tests, conditions \\
Law & Cases & Evidence, claims, precedents \\
Programming & Program states & Properties, assertions, tests \\
Finance & Portfolios & Risk factors, exposures \\
Science & Hypotheses & Experiments, observations \\
AI agents & Traces/executions & Constraints, evaluations \\
\bottomrule
\end{tabular}
\end{center}

This operational reading follows the constructive-positivity tradition initiated by Sambin~\cite{Sambin-OUP-PositiveTopology}
and formalized via inductively generated formal topologies~\cite{CoquandSambinSmithValentini2003}.
It abstracts the familiar locale/sheaf perspective of Mac~Lane--Moerdijk~\cite{MacLaneMoerdijk1992} down to the level of the forcing matrix we use in code.

In each domain:
\begin{itemize}
  \item $a \cov U$ encodes \emph{semantic entailment}: $a$ implies something in $U$
  \item $a \fish U$ encodes \emph{witnessed consistency}: there exists $x$ with $x\forces a$ and $\Diamond x\subseteq U$
  \item The forcing $x \forces a$ connects the abstract observables to concrete states
\end{itemize}

The rest of this paper develops these ideas formally through the forcing/adjunction/positivity construction.
The resource-aware part is intentionally an \emph{esquisse d'un programme}: it records a plausible direction and
isolates questions whose full treatment is left to the subsequent article~\cite{MannucciSambinForthcoming}.

\section{The starting datum: a forcing/incidence relation}

\begin{definition}[Forcing table]
Fix sets $X$ (points/models) and $S$ (basic opens/generators).
A \emph{forcing relation} is a subset $R\subseteq X\times S$ written
\[
x\forces a \quad\text{for }(x,a)\in R.
\]
\end{definition}

\begin{remark}[Interpretation]
Think of $a\in S$ as a basic open (or observable, or test) and $x\in X$ as a point (or state, or model).
Then $x\forces a$ reads: ``$x$ lies in the basic open $a$'' or ``$x$ satisfies observable $a$''.
No topology is assumed at the outset: we start from the incidence matrix.
\end{remark}

\begin{example}[Medical diagnosis]\label{ex:medical-forcing}
Let $X$ be a set of patients and $S$ a set of symptoms/conditions.
Then $x \forces a$ means ``patient $x$ exhibits symptom $a$''.
The forcing relation is the patient-symptom incidence matrix.
\end{example}

\begin{example}[Program verification]
Let $X$ be program states and $S$ be assertions/properties.
Then $x \forces a$ means ``state $x$ satisfies property $a$''.
\end{example}

\begin{remark}[Boolean matrix viewpoint]
The relation $R\subseteq X\times S$ is a Boolean-valued matrix
$R:X\times S\to \{0,1\}$.
Much of what follows is ``linear algebra over $\{0,1\}$'': joins are unions, meets are intersections.
This matrix viewpoint also suggests the enriched, resource-sensitive direction sketched in Section~\ref{sec:resources}.
\end{remark}

\section[The first adjunction: opens via ext adjoint Box]{The first adjunction: opens via $\ext \dashv \Box$}\label{sec:adj1}

\subsection{Direct image and the universal residual}

\begin{definition}[Extension and interior-kernel]
Define monotone maps
\[
\ext:\Pow(S)\to\Pow(X),\qquad \ext U=\{x\in X:\exists a\in U,\ x\forces a\},
\]
\[
\Box:\Pow(X)\to\Pow(S),\qquad \Box E=\{a\in S:\forall x\in X,\ (x\forces a\Rightarrow x\in E)\}.
\]
\end{definition}

\begin{example}[Medical: Extension and Interior]
Continuing Example~\ref{ex:medical-forcing}:
\begin{itemize}
  \item $\ext \{\mathsf{fever}, \mathsf{cough}\}$ = patients who have fever OR cough (or both)
  \item $\Box E$ where $E$ = ``patients over 65'' gives: symptoms that \emph{only} occur in patients over 65
\end{itemize}
\end{example}

\begin{proposition}[Galois adjunction I]
For all $U\subseteq S$ and $E\subseteq X$,
\[
\ext U\subseteq E \quad\Longleftrightarrow\quad U\subseteq \Box E.
\]
Equivalently, $\ext\dashv \Box$.
\end{proposition}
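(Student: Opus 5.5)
The plan is to unfold both sides of the equivalence into first-order statements about the forcing relation and observe that they are the same statement, up to currying and reordering the quantifiers. No earlier result is needed beyond the definitions of $\ext$ and $\Box$.

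\emph{Left-hand side.} Unfolding $\ext U\subseteq E$ gives
\[
\forall x\in X\;\bigl((\exists a\in U,\ x\forces a)\Rightarrow x\in E\bigr).
\]
Using the constructively valid equivalence $((\exists a\,P(a))\Rightarrow Q)\Leftrightarrow \forall a\,(P(a)\Rightarrow Q)$, where $Q$ does not depend on $a$, this becomes
\[
\forall x\in X\;\forall a\in U\;(x\forces a\Rightarrow x\in E).
\]

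\emph{Right-hand side.} Unfolding $U\subseteq\Box E$ gives
\[
\forall a\in U\;\forall x\in X\;(x\forces a\Rightarrow x\in E),
\]
which differs from the previous formula only in the order of the two universal quantifiers. The equivalence therefore follows.

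If a more hands-on presentation is preferred, I would argue the two directions separately.
\begin{itemize}
\item[($\Rightarrow$)] Take $a\in U$ and $x$ with $x\forces a$. Then $x\in\ext U\subseteq E$, so $x\in E$, and hence $a\in\Box E$.
\item[($\Leftarrow$)] Take $x\in\ext U$, with witness $a\in U$ such that $x\forces a$. Then $a\in\Box E$, and the defining property of $\Box E$ gives $x\in E$.
\end{itemize}

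There is no real obstacle. The only point of care is to keep the argument intuitionistically valid, since the paper's setting is constructive. This is why I use only the $\exists$-elimination rule and currying, and never contraposition or excluded middle.

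For the final clause, "Equivalently, $\ext\dashv\Box$": I would note that both maps are monotone, which is immediate from the definitions. The displayed biconditional is then exactly the definition of a Galois adjunction between the posets $(\Pow(S),\subseteq)$ and $(\Pow(X),\subseteq)$, with $\ext$ as left adjoint and $\Box$ as right adjoint.
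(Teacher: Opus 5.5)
Your proof is correct and follows essentially the same route as the paper: your two-direction argument is the paper's proof, which phrases ($\Rightarrow$) as $\ext\{a\}\subseteq\ext U\subseteq E$, the same step you take pointwise. Your quantifier-unfolding version is that same argument written in logical form, and it is constructively valid.
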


\begin{proof}
($\Rightarrow$) If $a\in U$, then $\ext \{a\}\subseteq \ext U\subseteq E$, hence $a\in \Box E$.
($\Leftarrow$) If $x\in \ext U$, choose $a\in U$ with $x\forces a$. Since $a\in\Box E$, we have $x\in E$.
\end{proof}

\begin{remark}[Universal nature]
$\Box E$ is defined by a $\forall$-condition; this is the ``universal/refinement'' side.
This is exactly why the cover relation induced by forcing is a universal statement.
\end{remark}

\subsection{Cover and the pointfree (formal-cover) side}

\begin{definition}[Induced preorder and cover]
Define a preorder on $S$ by
\[
a\le b \quad:\Longleftrightarrow\quad \forall x,\ (x\forces a\Rightarrow x\forces b),
\]
and define a cover relation $\cov$ by
\[
a\cov U \quad:\Longleftrightarrow\quad \forall x,\ (x\forces a \Rightarrow \exists b\in U,\ x\forces b).
\]
\end{definition}

\begin{example}[Medical: Cover as symptom entailment]\label{ex:medical-cover}
In the medical setting, $a \cov U$ means: every patient with condition $a$ exhibits at least one symptom in $U$.
For instance:
\begin{align*}
\mathsf{flu} &\cov \{\mathsf{fever}, \mathsf{cough}, \mathsf{fatigue}, \mathsf{myalgia}\} \\
\mathsf{bacterial\_infection} &\cov \{\mathsf{fever}, \mathsf{elevated\_WBC}\}
\end{align*}
These are \emph{diagnostic implications}: the disease implies certain observable symptoms.
\end{example}

\begin{remark}[Extensional reading]
Equivalently:
\[
a\le b \iff \ext \{a\}\subseteq \ext \{b\},\qquad
a\cov U \iff \ext \{a\}\subseteq \ext U.
\]
So $\cov$ is literally ``inclusion in a union''.
\end{remark}

\subsection{Extensional topology on points}

\begin{definition}[Generated opens on $X$]
A subset $E\subseteq X$ is \emph{extensional open} (generated by $S$) if
\[
E=\ext \Box E.
\]
\end{definition}

\begin{remark}
The operator $\inte:=\ext\circ\Box$ is an \emph{interior} operator on $\Pow(X)$: it extracts the largest extensional open contained in a given subset.
(Dually, $\A:=\Box\circ\ext$ is the \emph{saturation} operator, a closure/nucleus-like operator on $\Pow(S)$.)
Extensional opens are exactly the fixed points of $\inte$: ``those subsets describable by generators''.
\end{remark}

\begin{definition}[Extensional equivalence of points]
Define $x\sim y$ if the induced neighborhood sets agree:
\[
x\sim y \quad:\Longleftrightarrow\quad \forall a\in S,\ (x\forces a \Leftrightarrow y\forces a).
\]
\end{definition}

\begin{example}[Medical: Indistinguishable patients]
Two patients $x \sim y$ are \emph{symptomatically indistinguishable}: they exhibit exactly the same symptoms.
From a diagnostic standpoint, they are equivalent --- the same inferences apply to both.
\end{example}

\begin{remark}
Pointfree semantics does not distinguish points inside the same $\sim$-class.
This is the basic ``quotient-to-locale'' phenomenon: locale data sees neighborhood behavior, not point multiplicity.
\end{remark}

\section[The second adjunction: positivity via Diamond adjoint rest]{The second adjunction: positivity via $\Diamond \dashv \rest$}\label{sec:adj2}

The locale/cover side is universal; ``closed without complements'' needs an existential probe:
``does a basic open \emph{hit} a closed thing?''.
This is the conceptual origin of Sambin-style positivity~\cite{Sambin-OUP-PositiveTopology,Valentini2012,CirauloVickers2016}.

\subsection{Neighborhood sets and constraint-selected points}

\begin{definition}[Neighborhood set]
For $x\in X$, its neighborhood set (or \emph{observable profile}) is
\[
\Diamond x:=\{a\in S:\ x\forces a\}\subseteq S.
\]
\end{definition}

\begin{example}[Medical: Symptom profile]
For a patient $x$, the neighborhood set $\Diamond x$ is their \emph{complete symptom profile}:
the set of all symptoms/conditions they exhibit.
\end{example}

\begin{definition}[Selection by a constraint on neighborhoods]
Define
\[
\rest:\Pow(S)\to\Pow(X),\qquad
\rest U:=\{x\in X:\ \Diamond x\subseteq U\}.
\]
Thus $\rest U$ consists of points all of whose basic neighborhoods lie in $U$.
\end{definition}

\begin{example}[Medical: Patients confined to a symptom cluster]
$\rest \{\mathsf{fever}, \mathsf{cough}, \mathsf{fatigue}\}$ = patients whose \emph{entire} symptom profile
is contained in $\{\mathsf{fever}, \mathsf{cough}, \mathsf{fatigue}\}$.

These are patients who have \emph{only} fever, cough, and/or fatigue --- no other symptoms.
\end{example}

\subsection{The existential ``hit'' map}

\begin{definition}[$\Diamond$-Hit]
Define
\[
\Diamond:\Pow(X)\to\Pow(S),\qquad
\Diamond D:=\{a\in S:\exists x\in D,\ x\forces a\},
\qquad\text{that is,}\quad a\releps\Diamond D \;\sse\; \ext a\overlap D.
\]
\end{definition}
In words, a hits the set $D$ if $a$ has a witness inside $D$.

\begin{example}[Medical: Symptoms witnessed in a cohort]
If $D$ = patients in a clinical trial, then $\Diamond D$ = symptoms exhibited by at least one patient in the trial.
\end{example}

\begin{proposition}[Galois adjunction II]
For all $D\subseteq X$ and $U\subseteq S$,
\[
\Diamond D\subseteq U \quad\Longleftrightarrow\quad D\subseteq \rest U.
\]
Equivalently, $\Diamond\dashv \rest$.
\end{proposition}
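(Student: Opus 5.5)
The plan is to unfold both sides of the equivalence into the same first-order condition on the forcing relation, exactly parallel to the proof of Galois adjunction~I. The definitions give
\[
\Diamond D\subseteq U \iff \forall a\,\bigl(\exists x\in D,\ x\forces a\Rightarrow a\in U\bigr),
\qquad
D\subseteq \rest U \iff \forall x\in D,\ \Diamond x\subseteq U,
\]
and both will be shown equivalent to $\forall x\in D\ \forall a\,(x\forces a\Rightarrow a\in U)$. Equivalently, I will argue each direction separately.

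($\Rightarrow$) Assume $\Diamond D\subseteq U$ and take $x\in D$. For each $a\in\Diamond x$ we have $x\forces a$ with $x\in D$, so $a\in\Diamond D\subseteq U$. Hence $\Diamond x\subseteq U$, i.e.\ $x\in\rest U$. A slicker way to phrase this step is to note that $\Diamond\{x\}=\Diamond x$ and that $\Diamond$ is monotone, so $\Diamond x\subseteq\Diamond D\subseteq U$. This mirrors the singleton step in the proof of adjunction~I.

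($\Leftarrow$) Assume $D\subseteq\rest U$ and take $a\in\Diamond D$. Then there is a witness $x\in D$ with $x\forces a$. Since $x\in\rest U$, we have $\Diamond x\subseteq U$, and $a\in\Diamond x$, so $a\in U$.

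There is no real obstacle here. The only point to watch is constructive acceptability. The existential in $\Diamond D$ is used only as a hypothesis, by extracting a witness $x$. The universal in $\rest U$ is only instantiated. So no classical reasoning is needed, and the argument stays within the intuitionistic framework the paper adopts.
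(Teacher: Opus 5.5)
Your proof is correct and takes the same approach as the paper's. For ($\Rightarrow$) you show $\Diamond x\subseteq\Diamond D\subseteq U$ for each $x\in D$, and for ($\Leftarrow$) you extract a witness $x\in D$ with $x\forces a$ and use $\Diamond x\subseteq U$, exactly as the paper does.
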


\begin{proof}
($\Rightarrow$) Let $x\in D$. If $a\in \Diamond x$ then $a\in\Diamond D\subseteq U$. Hence $\Diamond x\subseteq U$ and $x\in\rest U$.

($\Leftarrow$) If $a\in\Diamond D$, choose $x\in D$ with $x\forces a$. Since $x\in\rest U$ we have $a\in \Diamond x\subseteq U$.
\end{proof}

\begin{remark}[Meaning]
$\Diamond D$ records which generators are \emph{witnessed} in $D$.
$\rest U$ picks points whose \emph{entire neighborhood behavior} is constrained by $U$.
This is the ``closed-without-complements'' move: $U$ acts as a \emph{code} for a closed-like object by constraining neighborhoods, not by negation.
\end{remark}

\section{Positivity and closed sets without complements}

\subsection[The positivity operator J = Diamond o rest]{The positivity operator $\J=\Diamond\circ\rest$}

\begin{definition}[Positivity interior on codes]
Define
\[
\J:\Pow(S)\to\Pow(S),\qquad \J U:=\Diamond \rest U.
\]
Equivalently,
\[
a\releps \J U\quad\sse\quad \exists x\in X\ \bigl(x\forces a\ \wedge\ \Diamond x\subseteq U\bigr)
\quad\sse\quad \ext a \overlap \rest U.
\]
\end{definition}

\begin{definition}[Positivity relation]
Define $\fish\subseteq S\times \Pow(S)$ by
\[
a\fish U \quad:\Longleftrightarrow\quad a\in \J U.
\]
\end{definition}

\begin{remark}[One-line intuition]
$a\fish U$ means: there exists a point $x$ in basic open $a$ such that all basic neighborhoods of $x$ lie in $U$.
So $U$ behaves like a ``closed neighborhood specification'' and $a$ \emph{hits} it.
\end{remark}

\begin{example}[Medical: Positivity as witnessed symptom clusters]\label{ex:medical-positivity}
The statement
\[
\mathsf{fever} \fish \{\mathsf{fever}, \mathsf{cough}, \mathsf{fatigue}\}
\]
means that there exists a patient who has fever and whose \emph{complete} symptom profile is contained in that set.

This patient is a \textbf{witness} that fever is consistent with the symptom cluster.

Contrast: $\mathsf{fever} \fish \{\mathsf{hypothermia}\}$ is \emph{false} because no patient
has both fever and only hypothermia --- the combination is contradictory.

Positivity detects \textbf{realizability}: which symptom combinations actually occur in patients.
\end{example}

\begin{proposition}[A point forces positivity]\label{prop:point-implies-positive}
Assume a point $x\in X$ and a generator $a\in S$ with $x\forces a$.
Then $a$ is positive in the unary sense, i.e.\ $a\fish S$.
More generally, if $U\subseteq S$ satisfies $\Diamond x\subseteq U$, then $a\fish U$.
\end{proposition}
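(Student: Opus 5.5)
The plan is to prove the general clause first and then read off the unary case by specialising $U:=S$.

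\textbf{General clause.} Unfold the definitions. By the definition of $\J$ and of the positivity relation, $a\fish U$ holds iff
\[
\exists y\in X\ \bigl(y\forces a\ \wedge\ \Diamond y\subseteq U\bigr).
\]
So it suffices to exhibit one witness, and the given point $x$ is the obvious choice. We have $x\forces a$ by hypothesis, and $\Diamond x\subseteq U$ also by hypothesis.

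Equivalently, in terms of the operators, $\Diamond x\subseteq U$ says $x\in\rest U$. Indeed, this is the adjunction $\Diamond\dashv\rest$ (Galois adjunction II) applied to $D=\{x\}$, noting that $\Diamond\{x\}=\Diamond x$. Since $x\forces a$ and $x\in\rest U$, we get $a\in\Diamond\rest U=\J U$, that is, $a\fish U$.

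\textbf{Unary case.} Take $U:=S$. The inclusion $\Diamond x\subseteq S$ is trivial, because $\Diamond x$ is by definition a subset of $S$. The general clause then gives $a\fish S$.

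\textbf{Expected obstacle.} I do not expect a real obstacle. The only point needing care is the identification $\Diamond\{x\}=\Diamond x$. This holds because $\Diamond\{x\}=\{a:\exists y\in\{x\},\ y\forces a\}$, and that is exactly $\{a: x\forces a\}$. The argument is constructive, since the existential in $\J U$ is witnessed explicitly by $x$.
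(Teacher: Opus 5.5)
Your proposal is correct and is essentially the paper's proof: $\Diamond x\subseteq U$ gives $x\in\rest U$, hence $a\in\Diamond\rest U=\J U$, and setting $U:=S$ gives the unary case. Routing through the adjunction $\Diamond\dashv\rest$ is harmless but unnecessary, since $x\in\rest U$ is literally the definition of $\rest U$.
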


\begin{proof}
If $\Diamond x\subseteq U$ then $x\in\rest U$ by definition, hence $a\in\Diamond \rest U=\J U$,
i.e.\ $a\fish U$.  Taking $U=S$ gives $a\fish S$.
\end{proof}

\subsection{Elementary properties}

\begin{proposition}[Contractive and monotone]
For all $U,V\subseteq S$:
\begin{enumerate}[label=(\alph*)]
\item $\J U\subseteq U$.
\item If $U\subseteq V$ then $\J U\subseteq \J V$.
\end{enumerate}
\end{proposition}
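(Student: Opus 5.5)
Both parts follow from the Galois adjunction II ($\Diamond\dashv\rest$) together with the explicit description of $\J U$. I will give the abstract argument and then, as a check, unfold it pointwise using the characterization
\[
a\releps \J U \;\sse\; \exists x\in X\,\bigl(x\forces a\wedge \Diamond x\subseteq U\bigr).
\]

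For (a), I will use the counit of the adjunction. Take $D:=\rest U$ in Galois adjunction II. The trivial inclusion $\rest U\subseteq\rest U$ then yields $\Diamond\rest U\subseteq U$, which is exactly $\J U\subseteq U$. Pointwise the argument reads as follows. Let $a\in\J U$. Choose a witness $x$ with $x\forces a$ and $\Diamond x\subseteq U$. Then $a\in\Diamond x$ by the definition of the neighborhood set, hence $a\in U$.

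For (b), I will first show that both $\rest$ and $\Diamond$ are monotone, and then compose. Monotonicity of $\rest$ is immediate: if $U\subseteq V$ and $\Diamond x\subseteq U$, then $\Diamond x\subseteq V$, so $\rest U\subseteq\rest V$. Monotonicity of $\Diamond$ holds because any existential witness in $D$ is also a witness in any $D'\supseteq D$. Composing the two gives $\Diamond\rest U\subseteq\Diamond\rest V$. Alternatively, both monotonicity facts follow formally from the adjunction, since adjoints between posets are monotone. Pointwise, this says that the same witness $x$ for $a\fish U$ also witnesses $a\fish V$.

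I expect no real obstacle here. The only point needing care is in (a): the witness $x$ satisfies $x\forces a$, so $a$ belongs to $\Diamond x$, and that is what lets the constraint $\Diamond x\subseteq U$ apply to $a$ itself. The argument is fully constructive, since it only transports an existential witness and uses no negation or choice beyond unpacking the definitions.
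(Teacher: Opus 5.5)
Your proposal is correct and matches the paper's proof: for (a) you take a witness $x$ with $x\forces a$ and $\Diamond x\subseteq U$, so that $a\in\Diamond x\subseteq U$, and for (b) you use monotonicity of $\rest$ followed by monotonicity of $\Diamond$. Your extra counit argument via Galois adjunction II (taking $D=\rest U$) is a valid abstract repackaging of the same pointwise step.
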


\begin{proof}
(a) If $a\in \J U$, choose $x$ with $x\forces a$ and $\Diamond x\subseteq U$. Then $a\in \Diamond x\subseteq U$.

(b) If $U\subseteq V$ then $\rest U\subseteq \rest V$, hence $\Diamond \rest U\subseteq \Diamond \rest V$.
\end{proof}

\begin{proposition}[Idempotence]\label{prop:idempotence-laws}
$\J= \Diamond\circ\rest$ is idempotent: $\J \J U=\J U$.
\end{proposition}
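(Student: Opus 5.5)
The plan is to prove the two inclusions separately, relying on the general fact that in any Galois adjunction $F\dashv G$ one has $FGF=F$ and $GFG=G$. Here $F=\Diamond$ and $G=\rest$, so $\J\J=\Diamond\rest\Diamond\rest=\Diamond(\rest\Diamond\rest)=\Diamond\rest=\J$. The main work is therefore to establish the triangle identity $\rest\Diamond\rest U=\rest U$ directly from Galois adjunction II.

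\emph{Inclusion $\J\J U\subseteq \J U$.} Apply the contractive part (a) of the preceding proposition with $U$ replaced by $\J U$. This gives $\J(\J U)\subseteq \J U$ immediately.

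\emph{Inclusion $\J U\subseteq \J\J U$.} Here I first show $\rest U\subseteq\rest\J U$. By Galois adjunction II, this is equivalent to $\Diamond\rest U\subseteq \J U$. That inclusion holds with equality, since $\J U=\Diamond\rest U$ by definition; it is just the unit $D\subseteq\rest\Diamond D$ taken at $D=\rest U$. Applying the monotone map $\Diamond$ to $\rest U\subseteq\rest\J U$ then yields $\J U=\Diamond\rest U\subseteq\Diamond\rest\J U=\J\J U$.

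Pointwise, the second inclusion says the following. If $x\forces a$ and $\Diamond x\subseteq U$, then every $b\in\Diamond x$ satisfies $b\fish U$, witnessed by the same $x$. Hence $\Diamond x\subseteq \J U$, and $x$ itself witnesses $a\fish \J U$. I would state this witness-reuse argument explicitly in the proof, since it is the conceptual content of idempotence.

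I do not expect a real obstacle. The only point needing care is recognising that $\rest U\subseteq\rest\Diamond\rest U$ comes from the adjunction unit rather than from contractivity, which runs in the opposite direction.
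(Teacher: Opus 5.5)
Your proof is correct and matches the paper's argument: contractivity gives $\J\J U\subseteq\J U$, and the unit $\mathrm{id}\le\rest\circ\Diamond$ at $\rest U$ gives $\rest U\subseteq\rest\J U$, so applying $\Diamond$ yields $\J U\subseteq\J\J U$. Your pointwise witness-reuse reading is a helpful gloss on the same route, not a different proof.
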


\begin{proof}
Since $\J\le \mathrm{id}$, we have $\J \J U\subseteq \J U$.
For the reverse, $\mathrm{id}\le \rest\circ\Diamond$ implies $\rest U\subseteq \rest \J U$; applying $\Diamond$ yields $\J U\subseteq \J \J U$.
\end{proof}

\begin{definition}[Formal closed (via positivity)]
Call $U\subseteq S$ \emph{formal closed} if $\J U=U$.
\end{definition}

\begin{example}[Medical: Complete symptom profiles]\label{ex:medical-closed}
Because $\J U\subseteq U$ holds for every $U$, the fixed-point condition $\J U=U$ amounts to the converse inclusion
$U\subseteq \J U$.  Thus a set $U\subseteq S$ is formal closed iff, for every symptom $a\in U$, there exists a patient
$x$ such that
\[
x\forces a \qquad\text{and}\qquad \Diamond x\subseteq U.
\]
In other words, every observable admitted by the closed code $U$ is actually realized by a witness whose complete
observable profile stays inside $U$.  This positive formulation is the useful constructive content of closedness here.
\end{example}

\subsection{The compatibility axiom}

The key interaction between covers and positivity is captured by:

\begin{proposition}[Compatibility]\label{prop:compatibility}
If $\A, \J$ are defined as above from $\forces$, then for all $U, V \subseteq S$:
\[
\A U \overlap \J V \;\to\; U \overlap \J V.
\]
\end{proposition}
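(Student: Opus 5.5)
The plan is to unfold the definitions of $\A=\Box\circ\ext$ and $\J=\Diamond\circ\rest$ and chase a single witness point. Assume $\A U\overlap\J V$, so there is $a\in S$ with $a\in\A U$ and $a\in\J V$. From $a\in\J V=\Diamond\rest V$ I extract a point $x\in X$ with
\[
x\forces a \qquad\text{and}\qquad \Diamond x\subseteq V .
\]
From $a\in\A U=\Box\ext U$, i.e.\ $\ext\{a\}\subseteq\ext U$ (equivalently $a\cov U$ by the extensional reading), every point forcing $a$ lies in $\ext U$. Applying this to the point $x$ gives some $b\in U$ with $x\forces b$.

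It then remains to show $b\in\J V$. Since $x\forces b$ and $\Diamond x\subseteq V$, Proposition~\ref{prop:point-implies-positive} (the general form, with $V$ in place of $U$) yields $b\fish V$, i.e.\ $b\in\J V$. Together with $b\in U$ this gives $U\overlap\J V$, as required.

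There is no real obstacle here. The one point needing care is to reuse the \emph{same} witness $x$: it is produced by the positivity hypothesis and then fed into the universal (cover) hypothesis. This is exactly where the existential side, $\Diamond\dashv\rest$, and the universal side, $\ext\dashv\Box$, interact. I will also remark that, in terms of generators, the statement is precisely the compatibility axiom $(a\cov U)\wedge(a\fish V)\Rightarrow\exists u\in U.\,u\fish V$ from the introduction.
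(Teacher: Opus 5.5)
Your proof is correct and follows essentially the same route as the paper's. Both arguments extract the witness $x$ from $a \fish V$, use $a \releps \A U$ (i.e.\ $\ext a \sub \ext U$) at that same $x$ to obtain some $u \in U$ with $x \forces u$, and conclude $u \fish V$ from $\Diamond x \subseteq V$. The paper does this last step inline, whereas you cite Proposition~\ref{prop:point-implies-positive}.
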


\begin{proof}
Suppose $\A U \overlap \J V$, that is, there exists $a$ with $a \releps \A U$ and $a \fish V$.
By positivity ($a\fish V$), there exists $x$ with $x \forces a$ and $\Diamond x \subseteq V$.
By $a \releps \A U$, i.e.\ $\ext a \sub \ext U$, since $x \forces a$ there exists $u \in U$ with $x \forces u$.
Then $u \in \Diamond x \subseteq V$, so $u \fish V$; hence $U \overlap \J V$.
\end{proof}

Compatibility is provable when cover and positivity are defined via points.  In a formal approach, it is assumed as an axiom. 

\begin{remark}[Why compatibility matters]
Compatibility is the \textbf{soundness condition} for refinement:
if you refine a hypothesis (via covers), you don't lose your witnesses (positivity).
This is what makes the formal framework coherent for inference.
\end{remark}

\begin{example}[Medical: Refinement preserves witnesses]
Suppose
\[
\mathsf{pneumonia} \cov \{\mathsf{fever}, \mathsf{cough}, \mathsf{infiltrates}\},
\]
and there exists a pneumonia patient confined to profile $V$.

Then there exists a fever patient (or cough patient, or infiltrates patient) also confined to $V$.

The diagnostic refinement from ``pneumonia'' to its symptoms preserves the witness.
\end{example}

\section{Information content and refinement games}\label{sec:info-games}

The formal relations $\cov$ and $\fish$ admit an interpretation closer to computation than to classical topology.
Think of a generator $a\in S$ as a \emph{partial information state}.  A cover $a\cov U$
says that to \emph{settle} the information in $a$ one must proceed to a feasible refinement and end up in \emph{one of} the subcases in $U$.

Positivity adds one crucial twist: existence is not classical negation of emptiness, but a \emph{witnessed} possibility of continuing the refinement process.

\subsection[The refinement game G(a,U)]{The refinement game $G(a,U)$}

Fix $a\in S$ and a set $U\subseteq S$ of ``acceptable outcomes''.
Define an infinite two-player game $G(a,U)$:

\begin{itemize}
  \item The current position is a generator $b\in S$ (initially $b=a$).
  \item \emph{Refiner} (``Devil'') chooses a cover $b\cov V$.
  \item \emph{Witness} (``God'') responds by choosing some $v\in V$.
  \item The next position is $b:=v$.
  \item Witness wins if she can play forever while keeping every chosen position inside $U$.
\end{itemize}

\begin{proposition}[Positivity as a winning condition]\label{prop:positivity-game}
If $a\fish U$, then Witness has a winning strategy in $G(a,U)$.
\end{proposition}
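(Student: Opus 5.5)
The strategy for Witness will be an invariant: she keeps the current position $b$ inside the formal closed set $\J U$. Equivalently, she maintains a point $x$ with $x\forces b$ and $\Diamond x\subseteq U$. I will use the pointwise cover and positivity from Sections~\ref{sec:adj1}--\ref{sec:adj2}, and the compatibility property (Proposition~\ref{prop:compatibility}) as the single engine of the argument.

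\textbf{Initial position.} At the start $b=a$ and $a\fish U$ holds by hypothesis, so $a\in\J U$ and the invariant holds. By contractivity, $\J U\subseteq U$, so every position satisfying the invariant lies in $U$.

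\textbf{Inductive step.} Suppose the current position is $b$ with $b\in\J U$, and Refiner plays a cover $b\cov V$. I apply Proposition~\ref{prop:compatibility} with the singleton $\{b\}$ in place of $U$ and $U$ in place of $V$. Since $b\cov V$ means $\ext\{b\}\subseteq\ext V$, we get $b\releps\A V$, hence $\A V\overlap\J U$. Compatibility then yields some $v\in V$ with $v\in\J U$, i.e.\ $v\fish U$. Witness answers with this $v$, and the invariant is restored.

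Concretely, Witness can even keep the same point throughout. Fix the witness $x$ from $a\fish U$. Since $x\forces b$ and $b\cov V$, there is $v\in V$ with $x\forces v$. Then $v\in\Diamond x\subseteq U$, and $x$ again witnesses $v\fish U$. So the strategy is simply: ``follow the fixed point $x$.'' This makes the strategy explicit, and it needs no choice beyond picking an element $v\in V$ with $x\forces v$ at each move.

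\textbf{Conclusion.} By induction on the number of rounds, every position reached lies in $\J U\subseteq U$, so Witness can always answer and plays forever inside $U$; she wins.

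\textbf{Main obstacle.} The only delicate point is formal rather than mathematical: making precise what a ``strategy'' is for an infinite game. I would define it as a function from finite play histories (sequences of positions and covers) to responses, and show it is well defined by the invariant. Winning then means every infinite play consistent with the strategy stays in $U$, which follows from the induction. Under the point-following strategy this function depends only on the current move, so no dependent-choice issue arises beyond choosing $v$ at each step.
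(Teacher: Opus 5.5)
Your proof is correct. Its second half is exactly the paper's argument: the paper fixes the witness $x$ for $a\fish U$ and, whenever Refiner plays $b\cov V$ with $x\forces b$, has Witness answer with some $v\in V$ such that $x\forces v$, so that every position lies in $\Diamond x\subseteq U$. That is your ``follow the fixed point $x$'' strategy.

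Your primary argument is a genuinely different decomposition. It keeps the invariant $b\releps\J U$, maintained by Proposition~\ref{prop:compatibility} together with contractivity $\J U\subseteq U$. There is one slip: you announce that you substitute the singleton $\{b\}$ for $U$. What you actually use, correctly, is compatibility with Refiner's set $V$ in place of $U$ and $U$ in place of $V$, namely $\A V\overlap\J U\to V\overlap\J U$, fed by $b\releps\A V$ and $b\releps\J U$.

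The compatibility route buys generality. It never mentions points; it uses only the implication ``$b\cov V$ and $b\fish U$ yield some $v\in V$ with $v\fish U$'' together with $\J U\subseteq U$. Both are axioms of positivity in the formal pointfree approach, where cover and positivity are primitive. So your strategy works for every formal positive topology, not only for point-induced ones.

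The paper's route buys a sharper and more concrete conclusion: a single point $x$ forces every position of the entire play, which is what underwrites reading a point as a strategy. Your invariant, unfolded in the point-induced setting, allows the witnessing point to change from round to round.
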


\begin{proof}
By $a\in \J U$ there exists $x\in X$ with $x\forces a$ and $\Diamond x\subseteq U$.
Whenever Refiner plays $b\cov V$ with $x\forces b$, there exists $v\in V$ with $x\forces v$.
Witness chooses such $v$. Since $v\in \Diamond x\subseteq U$, the position stays in $U$.
\end{proof}

\paragraph{Medical diagnosis as a refinement game.}
The game $G(\mathsf{fever}, U)$ where $U$ is a diagnostic profile:
\begin{itemize}
  \item Refiner (skeptic): ``You say fever? What kind? Bacterial? Viral?''
  \item Witness (diagnostician): ``Bacterial fever'' (chooses refinement in $U$)
  \item Refiner: ``What symptoms? Give me a cover.''
  \item Witness: ``Elevated WBC'' (still in $U$)
\end{itemize}
A real patient $x$ witnessing positivity supplies Witness with a strategy that survives every refinement challenge.

\paragraph{Legal reasoning as a refinement game.}
Let $X$ = legal cases, $S$ = claims/evidence/precedents.
The game: opposing counsel (Refiner) demands finer distinctions; the advocate (Witness) must find precedents.
Positivity = having a real case to cite at every challenge.

\section{Two adjunctions and four operators}\label{sec:two-adjunctions}

The same forcing table gives two adjunctions, hence four operators:

\[
\Pow(S)\ \underset{\Box}{\overset{\ext}{\rightleftarrows}}\ \Pow(X)
\qquad(\ext\dashv \Box),
\qquad\qquad
\Pow(X)\ \underset{\rest}{\overset{\Diamond}{\rightleftarrows}}\ \Pow(S)
\qquad(\Diamond\dashv \rest).
\]

\begin{center}
\begin{tabular}{lll}
\toprule
\textbf{Operator} & \textbf{Type} & \textbf{Meaning} \\
\midrule
$\ext$ & $\Pow(S) \to \Pow(X)$ & Points satisfying some observable in $U$ \\
$\Box$ & $\Pow(X) \to \Pow(S)$ & Observables universally valid on $E$ \\
$\Diamond$ & $\Pow(X) \to \Pow(S)$ & Observables witnessed in $D$ \\
$\rest$ & $\Pow(S) \to \Pow(X)$ & Points confined to profile $U$ \\
\midrule
$\A = \Box \circ \ext$ & $\Pow(S) \to \Pow(S)$ & Saturation (formal opens) \\
$\inte = \ext \circ \Box$ & $\Pow(X) \to \Pow(X)$ & Interior (extensional opens) \\
$\J = \Diamond \circ \rest$ & $\Pow(S) \to \Pow(S)$ & Reduction (formal closed codes) \\
$\cl = \rest \circ \Diamond$ & $\Pow(X) \to \Pow(X)$ & Closure (on points) \\
\bottomrule
\end{tabular}
\end{center}

\begin{remark}[Two sides of the same coin]
Both adjunctions arise from the same forcing relation, but govern different semantics:
\begin{itemize}
\item $\ext\dashv\Box$ = universal refinement / locale / covers
\item $\Diamond\dashv\rest$ = existential probes / positivity / closed-without-complements
\end{itemize}
\end{remark}

\section{Reconstruction of forcing from residuals}

A central observation (cf.\ Ciraulo--Sambin~\cite{CirauloSambin2012}) is that the forcing table is recoverable from either left adjoint,
since left adjoints preserve joins and are determined by their action on singletons.

\begin{theorem}[Singleton reconstruction]
If $F=\ext$ arises from $\forces$, then $\forces$ is uniquely determined by:
\[
x\forces a \Longleftrightarrow x\in \ext \{a\}.
\]
Similarly for $\Diamond$:
\[
x\forces a \Longleftrightarrow a\in \Diamond \{x\}.
\]
\end{theorem}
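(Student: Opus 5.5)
The plan is to unfold the definitions of $\ext$ and $\Diamond$ at singletons, and then read ``uniquely determined'' as injectivity of the assignment sending a relation to its left adjoint. No earlier result is strictly needed beyond the definitions. The adjunctions $\ext\dashv\Box$ and $\Diamond\dashv\rest$ only serve to explain why singletons suffice.

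\textbf{The two equivalences.} By definition, $\ext\{a\}=\{x\in X:\exists b\in\{a\},\ x\forces b\}$. The existential over a singleton collapses, so $x\in\ext\{a\}\iff x\forces a$. Similarly, $\Diamond\{x\}=\{a\in S:\exists y\in\{x\},\ y\forces a\}$ collapses to $a\in\Diamond\{x\}\iff x\forces a$; equivalently $\Diamond\{x\}=\Diamond x$, the neighborhood set. This part is routine.

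\textbf{Uniqueness.} Suppose two forcing relations $R,R'\subseteq X\times S$ induce the same map $\ext_R=\ext_{R'}:\Pow(S)\to\Pow(X)$. Evaluating at $\{a\}$ and applying the first equivalence to each relation gives $(x,a)\in R\iff x\in\ext_R\{a\}=\ext_{R'}\{a\}\iff(x,a)\in R'$, so $R=R'$. The argument for $\Diamond$ is the same, evaluating at $\{x\}$ instead.

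\textbf{Converse and the main point.} For the converse direction I would also note that $\ext_R$ is recovered from its singleton values: $\ext U=\bigcup_{a\in U}\ext\{a\}$, and likewise $\Diamond D=\bigcup_{x\in D}\Diamond\{x\}$. These are just the definitions, and they reflect that left adjoints preserve all joins. So the relation and the join-preserving map determine each other, and the right adjoints $\Box$ and $\rest$ are determined in turn as residuals. The only conceptual subtlety is making precise what ``uniquely determined'' means, which I take to be the injectivity statement above. There is no genuine computational obstacle.
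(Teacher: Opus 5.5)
Your proof is correct and essentially matches the paper's. The paper's entire argument is the union decomposition $\ext U=\bigcup_{a\in U}\ext\{a\}$ (a left adjoint is determined by its values on singletons), which is your ``converse'' step. Your explicit unfolding at singletons, your injectivity argument $\ext_R=\ext_{R'}\Rightarrow R=R'$, and your treatment of the $\Diamond$ case make precise what ``uniquely determined'' means, which the paper leaves implicit.
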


\begin{proof}
Since $\ext$ preserves unions, $\ext U=\bigcup_{a\in U}\ext \{a\}$.
Thus $x\in \ext U$ iff $\exists a\in U$ with $x\in \ext \{a\}$.
\end{proof}

\begin{remark}[What is \emph{not} reconstructible]
The composites $j=\Box\circ\ext$ or $\J=\Diamond\circ\rest$ generally lose witness-level information.
The residual \emph{pair} determines forcing; a single composite does not.
\end{remark}

\section{Feasible refinement and resources: a programmatic sketch}\label{sec:resources}

The preceding sections are the mathematically established core of this survey.  We now turn to resources.
This section is intentionally exploratory: its purpose is to identify a clean baseline and the questions that
must be settled in a fuller theory of resource-bounded Positive Topology~\cite{MannucciSambinForthcoming}.

\subsection{Truth first, cost second: a minimal baseline}\label{subsec:ordered-costs}

A resource annotation should refine ordinary forcing rather than replace truth by affordability.  We therefore
retain the Boolean forcing relation $x\forces a$ and annotate verified incidences with a cost.

\begin{definition}[Budget preorder]
A \emph{budget preorder} is a preorder $(B,\preceq)$.  We read $b\preceq b'$ as saying that anything
feasible within budget $b$ is also feasible within the more permissive budget $b'$.
\end{definition}

\begin{definition}[Cost annotation and budgeted forcing]\label{def:cost-valued-forcing}
Let
\[
c:\{(x,a)\in X\times S:x\forces a\}\longrightarrow B
\]
assign a verification cost to each true incidence.  For $b\in B$ define
\[
x\forces_b a
\quad:\!\!\Longleftrightarrow\quad
x\forces a\ \text{ and }\ c(x,a)\preceq b.
\]
Thus $x\forces_b a$ means not merely that testing $a$ costs at most $b$, but that $a$ is true at $x$ and
that this truth can be verified within budget $b$.
\end{definition}

\begin{example}[Medical: Verification costs]\label{ex:medical-costs}
For illustration, suppose verified observations carry the following costs:

\begin{center}
\begin{tabular}{lrl}
\toprule
\textbf{Test/Observation} & \textbf{Cost} & \textbf{Notes} \\
\midrule
Fever & \$0 & Thermometer \\
Cough & \$0 & Observation \\
Blood pressure & \$5 & Standard equipment \\
Blood test (WBC) & \$50 & Lab work \\
PCR test & \$100 & Molecular \\
X-ray & \$200 & Imaging \\
CT scan & \$500 & Advanced imaging \\
\bottomrule
\end{tabular}
\end{center}

If patient $x$ really has elevated WBC and that fact costs \$50 to verify, then $x\forces_b\mathsf{elevated\_WBC}$
for every budget $b$ at least \$50.  A negative test result is not represented by a cheap positive incidence;
it belongs to the underlying observational vocabulary only if we explicitly include a corresponding generator.
\end{example}

\begin{remark}[Filtration of forcing]
For $b\preceq b'$ we have
\[
x\forces_b a\Longrightarrow x\forces_{b'}a.
\]
Hence the \emph{forcing relations} form a monotone filtration in the budget.  No corresponding monotonicity of
induced covers or positivity relations is asserted here.
\end{remark}

\subsection{Budget-indexed covers and positivity: the next problem}

At each fixed budget $b$ one may repeat the constructions of the Boolean case:
\[
a\cov_b U
\quad:\Longleftrightarrow\quad
\forall x\,\bigl(x\forces_b a\Rightarrow \exists u\in U\;x\forces_bu\bigr),
\]
\[
a\fish_b U
\quad:\Longleftrightarrow\quad
\exists x\,\bigl(x\forces_ba\ \land\ \Diamond_bx\subseteq U\bigr),
\qquad
\Diamond_bx:=\{a\in S:x\forces_ba\}.
\]
For every fixed $b$, these are simply the cover and positivity relations induced by the budget-restricted forcing table.

The subtle question is how they vary with $b$.  Although $\forces_b$ is monotone in the budget, neither
$\cov_b$ nor $\fish_b$ inherits a naive monotonicity law automatically.  Increasing a budget may introduce
new points satisfying the antecedent of a cover, new possible witnesses, and a larger observable profile
$\Diamond_bx$; these effects need not point in the same direction.  Characterizing conditions under which
families $\{\cov_b\}_{b\in B}$ and $\{\fish_b\}_{b\in B}$ have useful coherence properties is one of the
main tasks deferred to the next article~\cite{MannucciSambinForthcoming}.

\subsection{Enriched values: an outlook, not a final semantics}

A more ambitious route is to replace the Boolean matrix by a value object $V$ carrying both logical and resource
information.  Schematically one seeks
\[
R:X\times S\longrightarrow V
\]
together with suitable projections or threshold maps recovering ordinary and budget-indexed forcing.  In particular,
the Boolean content should remain recoverable from the enrichment rather than being identified with a numerical cost.

If $V$ is equipped with a sufficiently complete residuated monoidal order $(V,\le,\otimes,I,\multimap)$, the familiar
enriched formulas suggest the operators
\[
(\ext_V\phi)(x)=\bigvee_{a\in S}\phi(a)\otimes R(x,a),
\qquad
(\Box_V\psi)(a)=\bigwedge_{x\in X}\bigl(R(x,a)\multimap\psi(x)\bigr),
\]
\[
(\Diamond_V\psi)(a)=\bigvee_{x\in X}\psi(x)\otimes R(x,a),
\qquad
(\rest_V\phi)(x)=\bigwedge_{a\in S}\bigl(R(x,a)\multimap\phi(a)\bigr).
\]
Under the usual completeness/residuation hypotheses these have the expected adjoint shape
$\ext_V\dashv\Box_V$ and $\Diamond_V\dashv\rest_V$, and suggest an enriched positivity operator
$\J_V:=\Diamond_V\circ\rest_V$.

We record these formulas only as a bridge to the programme.  The precise choice of $V$, the recovery of the
underlying truth relation, the role of thresholds, and the interaction with logical structural rules are not fixed
in this survey and will be treated separately~\cite{MannucciSambinForthcoming}.

\section{Conceptual summary and outlook}\label{sec:summary-outlook}

\begin{center}
\fbox{\begin{minipage}{0.9\linewidth}
\textbf{Three layers, with different status in this survey.}
\[
\begin{array}{c}
\text{(i) forcing and the two adjunctions}\\[2pt]
\Downarrow\\[2pt]
\text{(ii) cover, positivity, and closedness without complements}\\[2pt]
\Downarrow\\[2pt]
\text{(iii) resource-aware refinement and logic}
\end{array}
\]
Layers (i)--(ii) are developed here.  Layer (iii) is the announced research programme.
\end{minipage}}
\end{center}

\subsection{Outlook: toward budget-indexed truth}

A future resource logic should introduce a new satisfaction judgement, say
\[
s\Vdash_b\varphi,
\]
where $s$ is an explicitly specified semantic state, $\varphi$ belongs to a defined syntax, and $b$ is a budget.
This judgement is \emph{not} the same object as the incidence relation $x\forces_b a$ of Section~\ref{sec:resources};
it would be a recursively defined logical semantics built on top of the resource-refined forcing structure.

For example, if resources must be split between conjunctive subgoals, one might require a relation
$\Split(b;b_1,b_2)$ and adopt a clause of the schematic form
\[
s\Vdash_b(\varphi\wedge\psi)
\quad\Longleftrightarrow\quad
\exists b_1,b_2\,
\bigl(\Split(b;b_1,b_2)\ \&\ s\Vdash_{b_1}\varphi\ \&\ s\Vdash_{b_2}\psi\bigr).
\]
Likewise, existential truth should remain witnessed and budget-sensitive.  But the syntax, semantic states,
budget algebra, structural rules, and exact relation between logical satisfaction and positivity are deliberately
left open here.  They form part of the next article~\cite{MannucciSambinForthcoming}.

\appendix
\section{Application to AI engineering}

This appendix explains why forcing/positivity/cost provides concrete engineering value for AI systems.

\subsection{The AI analogue of forcing}

In AI engineering, one evaluates systems against requirements:
\begin{itemize}
\item $X$ = executions / traces / agent rollouts
\item $S$ = constraints (``tool-call succeeded'', ``no disallowed content'', ``latency $<500$ms'', etc.)
\item $x\forces a$ = ``execution $x$ satisfies constraint $a$''
\end{itemize}

The Boolean forcing table $R\subseteq X\times S$ is therefore an evaluation matrix.  The resource-sensitive
constructions of Section~\ref{sec:resources} should be read as possible refinements of this basic evaluation layer.

\subsection{Two adjunctions = two engineering operations}

\paragraph{(1) $\ext\dashv\Box$: specification refinement.}
\begin{itemize}
\item $\ext U$: which traces satisfy at least one check in $U$?
\item $\Box E$: which checks are universally valid over traces $E$?
\item $a\cov U$: whenever $a$ holds, some $b\in U$ holds
\end{itemize}
This is specification management: what the system guarantees.

\paragraph{(2) $\Diamond\dashv\rest$: witness finding.}
\begin{itemize}
\item $\Diamond D$: which checks are witnessed in trace-set $D$?
\item $\rest U$: which traces live entirely inside constraint envelope $U$?
\item $a\fish U$: exists a trace satisfying $a$ with profile $\subseteq U$
\end{itemize}
This is counterexample search and safe operating region identification.

\subsection{Positivity = safety envelopes}

Engineers want ``safe = complement of bad''. This fails because:
\begin{enumerate}[label=(\alph*)]
\item the complement is not observable (unknown unknowns)
\item violations are sparse and adversarial
\item safety is better defined by positive evidence of robustness
\end{enumerate}

Positivity gives a constructive alternative:
\begin{itemize}
\item closed code $U$ with $\J U=U$ is a safety envelope
\item $a\fish U$ provides witness-based membership
\end{itemize}

Safety is a robust region, not a complement.

\subsection{LLM + PosTop: an illustrative architecture}\label{subsec:llm-postop}

Current LLM agents can fail by producing claims without witnesses, by combining inconsistent claims, or by spending
verification resources poorly.  The forcing/cover/positivity picture suggests an engineering discipline in which
candidate outputs are checked against explicit observations, witnesses, and budgets.

At this point, however, a typing distinction is essential.  The present survey has defined cover and positivity for
generators $a\in S$ and subsets $U\subseteq S$; it has \emph{not} identified arbitrary LLM hypotheses with generators.
A concrete system must therefore provide an encoding of hypotheses into the formal structure -- for example a map
$\iota:H\to S$ or a richer syntax built over $S$ -- together with a representation of the current observation state.
Only after that choice is made do expressions involving cover or positivity become well typed.

One possible implementation pattern is then:
\begin{enumerate}
  \item the LLM proposes candidate hypotheses $H$;
  \item an explicit encoding maps each candidate into the PosTop structure;
  \item cover checks test the chosen direction of semantic consequence, while positivity checks demand a witness;
  \item a resource layer proposes an observation or test that best discriminates among the remaining candidates;
  \item the process repeats until the task is resolved or the budget is exhausted.
\end{enumerate}

This is an architectural \emph{schema}, not a theorem of the present paper.  In particular, the direction of the cover
check depends on how an observation state and a hypothesis are represented; the informal expression
``$h\cov\mathrm{observed}$'' used in earlier drafts should not be read as defined notation.  A fully typed account of
hypotheses, observations, resource choice, and agent traces belongs to the subsequent development~\cite{MannucciSambinForthcoming}.

\paragraph{Comparison with knowledge graphs.}

The comparison with knowledge graphs is best understood as a difference of emphasis, not as an impossibility claim.
A bare knowledge graph primarily records entities and relations; entailment procedures, provenance, costs, and
verification policies may certainly be layered on top.  Positive Topology contributes a native language for cover,
positivity, witness preservation, and -- in the proposed extension -- feasible refinement.

\begin{quote}
A knowledge graph is naturally a representation of structured facts.  The PosTop programme aims to add a grammar of
refinement: what follows, what remains witnessable, and eventually what can be verified within available resources.
\end{quote}

\section{Software and Reproducibility}\label{app:software}

The LaTeX file you are reading lives in the same repository as the reference implementation of the forcing matrix, adjunctions, and notebooks.
The codebase is organized as follows:
\begin{itemize}
  \item \texttt{src/postop/core.py} implements the data structures discussed in Sections~\ref{sec:adj2}--\ref{sec:info-games}, with the forcing relation encoded as a Python dictionary of point $\to$ observable sets, plus the cost-aware variants in \texttt{src/postop/costs.py}.
  \item \texttt{src/postop/operators.py} and \texttt{src/postop/llm.py} provide tracing/explainability helpers that surface the cover witnesses/counterexamples required by the diagnostic examples.
  \item \texttt{examples/} mirrors the running medical and guardrail scenarios; \texttt{tests/} exercises the adjunction identities via \texttt{pytest}.
  \item \texttt{docs/REPRODUCIBILITY.md} gives a copy/paste runbook matching the commands below.
\end{itemize}
The latest sources live at \url{https://github.com/mirco-mannucci/postop}; we snapshot paper+code together via the git tag \texttt{v0.1.0} (and subsequent releases) so the commands below always map to a reproducible commit/DOI pair.

\subsection{Environment setup}

We keep the environment lightweight (pure Python + pytest). From the repository root:
\begin{verbatim}
python3 -m venv .venv
source .venv/bin/activate
pip install -e .[dev]
\end{verbatim}

\subsection{Running the examples}

The forcing table is serialized as ordinary Python dictionaries and can be inspected directly. To reproduce the medical walkthrough and the illustrative guardrail example discussed in this survey:
\begin{verbatim}
python examples/medical.py
python examples/llm_guardrail.py
\end{verbatim}
Each script prints the $\ext/\Box/\Diamond/\rest$ 
results, the compatibility witnesses, and the guardrail counterexamples that correspond to the formal statements in the main text.

\subsection{Notebooks and tests}

The ActionList roadmap adds four short notebooks under \texttt{notebooks/}; they can be executed headlessly via:
\begin{verbatim}
jupyter nbconvert --to notebook --execute \
  notebooks/01_basics_ext_int_hit_sel_j.ipynb
\end{verbatim}
Swap in the remaining notebook filenames (medical diagnosis, costs, guardrails) to reproduce the figures in the extended version of the paper.

Finally, run the test suite to check the adjunction/proof obligations:
\begin{verbatim}
pytest -q
\end{verbatim}
These tests import the installed package so they are venv-friendly and catch regressions both in the mathematical invariants (idempotence/contractivity) and in the explainability helpers referenced by the notebooks.



\begin{thebibliography}{99}

\bibitem{Sambin-OUP-PositiveTopology}
G.~Sambin.
\newblock \emph{Positive Topology: A New Practice in Constructive Mathematics}.
\newblock Oxford University Press, 2025.
\newblock \url{https://doi.org/10.1093/9780191746796.001.0001}.

\bibitem{CirauloSambin2012}
F.~Ciraulo and G.~Sambin.
\newblock A constructive {G}alois connection between closure and interior.
\newblock \emph{The Journal of Symbolic Logic}, 77(4):1308--1324, 2012.
\newblock \url{https://doi.org/10.2178/jsl.7704150}; arXiv:1101.5896.

\bibitem{Valentini2012}
S.~Valentini.
\newblock Relative formal topology: the binary positivity predicate comes first.
\newblock \emph{Mathematical Structures in Computer Science}, 22(1):69--102, 2012.
\newblock \url{https://doi.org/10.1017/S0960129511000466}.

\bibitem{CirauloVickers2016}
F.~Ciraulo and S.~Vickers.
\newblock Positivity relations on a locale.
\newblock \emph{Annals of Pure and Applied Logic}, 167(9):806--819, 2016.
\newblock \url{https://doi.org/10.1016/j.apal.2016.04.009}.

\bibitem{MacLaneMoerdijk1992}
S.~Mac~Lane and I.~Moerdijk.
\newblock \emph{Sheaves in Geometry and Logic: A First Introduction to Topos Theory}.
\newblock Springer-Verlag, New York, 1992.
\newblock \url{https://doi.org/10.1007/978-1-4612-0927-0}.

\bibitem{CoquandSambinSmithValentini2003}
T.~Coquand, G.~Sambin, J.~Smith, and S.~Valentini.
\newblock Inductively generated formal topologies.
\newblock \emph{Annals of Pure and Applied Logic}, 124(1--3):71--106, 2003.
\newblock \url{https://doi.org/10.1016/S0168-0072(03)00052-6}.

\bibitem{MannucciSambinForthcoming}
A.~Meneghello, M.~A.~Mannucci, and G.~Sambin.
\newblock \emph{Resource-Bounded Positive Topology: Threshold Forcing, Feasible Refinement, and the TROPOS Programme}.
\newblock In preparation, 2026.

\end{thebibliography}
\end{document}